\documentclass[letterpaper, 10 pt, conference]{ieeeconf}  

\IEEEoverridecommandlockouts                              

\usepackage{cite}
\usepackage{amsmath,amssymb,amsfonts}
\usepackage{commath}
\usepackage{bm}
\usepackage{algpseudocode}
\usepackage{algorithm}
\usepackage{graphicx}
\usepackage{subcaption}
\usepackage{textcomp}
\usepackage{xcolor}
\usepackage{multirow}
\usepackage{float}
\usepackage{booktabs}
\usepackage{stackengine}
\usepackage{etoolbox}
\usepackage{mathtools}
\usepackage{optidef}

\newtheorem{assumption}{Assumption}

\newtheorem{theorem}{Theorem}

\newtheorem{corollary}{Corollary}
\newtheorem{remark}{Remark}
\newtheorem{lemma}{Lemma}

\usepackage[utf8]{inputenc}
\DeclareUnicodeCharacter{2212}{-}
\newcommand{\RNum}[1]{\uppercase\expandafter{\romannumeral#1\relax}}
\DeclareMathOperator{\sgn}{sign}

\DeclareMathOperator{\erf}{erf}
\title{\LARGE \bf
	Prescribed-Performance-Aware Hybrid-Gain-Based Robust Controller
}

\author{Amit Shivam$^{1}$, Kiran Kumari$^{2}$, and Fernando A.C.C. Fontes$^{1}$ 
	\thanks{$^{1}$Amit Shivam and Fernando A.C.C. Fontes are with SYSTEC, ARISE, Faculty of Engineering, Universidade do Porto, Porto, Portugal, Email:
		{\tt\small amit@upwind.pt}, and {\tt\small faf@fe.up.pt}}
	\thanks{$^{2}$Kiran Kumari is with Department of Electrical Engineering, Indian Institute of Science, Bengaluru, India, Email: {\tt\small kirank@iisc.ac.in}}
    \thanks{This research has been supported by FCT/MCTES (PIDDAC), through project 2022.02801.PTDC-UPWIND-ATOL (https://doi.org/10.54499/2022.02801.PTDC) and grant 2021.07346.BD.}
}

\begin{document}
	
	\maketitle
	\thispagestyle{empty}
	\pagestyle{empty}

\begin{abstract}
   This paper proposes a prescribed performance function (PPF)–aware hybrid gain finite-time sliding mode control (HG-FTSMC) framework for a class of nonlinear systems subject to matched disturbances. The hybrid gain structure ensures bounded control effort while retaining finite-time convergence, and the incorporation of PPFs enables explicit enforcement of transient performance requirements. Theoretical guarantees are first established for first-order systems, characterizing finite-time convergence, disturbance rejection, and residual bounds. The approach is then extended to second-order dynamics, where a sliding manifold is designed using PPF constraints to facilitate controlled shaping of position and velocity transients. Simulation studies illustrate the proposed design under matched peak control conditions. Comparative results for second-order systems demonstrate that, while a well-tuned non-PPF hybrid gain controller achieves competitive tracking performance, the PPF-aware formulation strictly enforces prescribed transient constraints and yields consistent reductions of approximately 9–12\% in integral error and control energy metrics without increasing peak actuation effort. 
\end{abstract}

\section{Introduction}
Finite-time sliding mode control (FTSMC) strategies have attracted significant attention in recent years due to their ability to guarantee fast convergence, strong robustness against matched uncertainties, and improved transient behavior compared to conventional asymptotic schemes. In contrast to classical sliding mode control, finite-time formulations ensure that the system states reach the desired equilibrium (or a bounded neighbourhood thereof in the presence of disturbances) within a finite time interval, which is particularly advantageous for safety-critical and high-performance applications. Terminal sliding mode control and its variants have been extensively investigated to achieve finite-time convergence for nonlinear systems subject to matched disturbances, with applications ranging from robotics and mechanical systems to aerospace and power electronics\cite{galicki2015finite,du2011finite,li2017finite}.

Early terminal sliding mode control\cite{venkataraman1993control,yu2002fast, YU2005} designs achieved finite-time convergence by introducing nonlinear sliding manifolds, but often suffered from singularity issues near the origin. To overcome this limitation, nonsingular terminal sliding mode control\cite{feng2002non,feng2013nonsingular} schemes were proposed, enabling finite-time convergence without singular behavior and improving practical implementation. Advances in finite-time stability theory have further strengthened the theoretical usage of FTSMC, providing tools for convergence analysis and disturbance rejection. As a result, FTSMC has emerged as a powerful framework for applications requiring rapid response and robustness under uncertainty\cite{BhatandBernstein2000,basin2019finite,Yu2021,song2023prescribed}.

Despite these advantages, a key limitation of most existing finite-time SMC designs lies in their implicit and largely uncontrolled transient behavior. While finite-time convergence guarantees that the state reaches the equilibrium within a bounded time, it does not specify how the state evolves during the transient phase. In practice, this often leads to undesirable behaviors such as large overshoots, aggressive control actions, or violation of constraints before convergence is achieved. Such transient excursions are especially problematic in actuator-limited systems and safety-critical applications, where exceeding predefined bounds—even temporarily—may not be feasible in real-world applications.

Prescribed performance control (PPC) and prescribed performance function (PPF)–based methods have been introduced to explicitly shape the transient and steady-state behavior of tracking errors\cite{Bechlioulis2008}. By enforcing time-varying performance bounds, PPF-based control provides guarantees on overshoot, convergence rate, and steady-state accuracy for unknown feedback systems\cite{BECHLIOULIS2014}. Several works have integrated prescribed performance with adaptive and robust control\cite{TRAN2020}. However, most existing approaches focus on asymptotic or prescribed-time convergence and do not address finite-time sliding dynamics with bounded control effort. In particular, PPF-based adaptive and finite-time control strategies have been proposed to ensure constraint satisfaction while maintaining robustness against uncertainties. Recent work on hybrid-gain based finite-time sliding mode control (HG-FTSMC)\cite{shivamArxiv2024}(non-PPF) has demonstrated that combining constant and nonlinear gain components can significantly improve convergence behavior with a reduced control effort. However, the design has so far been developed without explicit consideration of prescribed transient envelopes, leaving a gap between robust finite-time stabilization and guaranteed transient performance.

Motivated by these limitations, this paper proposes a PPF-aware HG-FTSMC framework for nonlinear systems subject to matched disturbances. The key idea is to combine the robustness and finite-time convergence properties of sliding-mode control with a hybrid gain structure that ensures bounded control effort while explicitly enforcing prescribed transient performance requirements. 
The framework is developed for first-order systems and extended to second-order dynamics.
Simulation studies on both first- and second-order systems demonstrate that the proposed approach enforces transient constraints and reduces overshoot and accumulated tracking error, compared to the non-PPF hybrid gain FTSMC under matched peak control. Unlike existing PPF-based sliding mode designs that focus primarily on error shaping, the present framework explicitly exposes the trade-off between prescribed transient constraints and available control authority. By enforcing matched peak actuation across designs, the proposed PPF-aware HG-FTSMC demonstrates that prescribed performance should be interpreted as a feasibility mechanism rather than a guarantee of aggressive performance enhancement.

The remainder of this paper is organized as follows. The problem formulation is discussed
in Section~\ref{sec:problem}. In Section~\ref{sec: proposed controller design}, the proposed control scheme
will be derived followed
by numerical simulation
results in Section~\ref{sec: simulation results}. Finally, Section~\ref{sec:conclusion} concludes the paper.

\section{Problem Formulation}\label{sec:problem}

Consider a class of nonlinear systems described by
\begin{equation}
\dot x(t) = f\big(x(t)\big) + g\big(x(t)\big)u(t) + d(t),
\label{eq:nlsys}
\end{equation}
where $x(t)\in\mathbb{R}$ is the system state, $u(t)\in\mathbb{R}$ is the control input,
$f:\mathbb{R}\rightarrow\mathbb{R}$ and $g:\mathbb{R}\rightarrow\mathbb{R}$ are locally
Lipschitz nonlinear functions, and $d(t)$ denotes an unknown matched disturbance.

\begin{assumption}\label{ass:dmax}
The disturbance $d(t)$ is bounded such that
\begin{equation}
|d(t)| \le d_{\max}, \qquad d_{\max}>0 \quad \text{finite constant}
\label{eq:disturbance}
\end{equation}
\end{assumption}

\begin{assumption}
The control gain function satisfies $g(x)\neq 0$ for all $x\in\mathbb{R}$, ensuring
that the system is input-affine and controllable.
\end{assumption}

\subsection{Prescribed-performance and design objective}
\label{sec:ppf_transform}

To enforce a user-specified transient bound on the tracking error, we introduce a time-varying performance function (PF)
\begin{equation}
\rho(t)= (\rho_0-\rho_\infty)e^{-\lambda t}+\rho_\infty,
\label{eq:rho}
\end{equation}
with design parameters $\rho_0>\rho_\infty>0$ and $\lambda>0$. The prescribed-performance requirement is
\begin{equation}
|x(t)|<\rho(t),\quad \forall t\ge 0.
\label{eq:ppf_req}
\end{equation}
To embed \eqref{eq:ppf_req} into the control design, define PPF transformation
\begin{equation}
x(t)=\rho(t)\,\mathrm{erf}(\xi(t)),\qquad 
\xi(t)=\mathrm{erf}^{-1}\!\Big(\frac{x(t)}{\rho(t)}\Big),
\label{eq:ppf_transform}
\end{equation}
which is well-defined and finite whenever $|x(t)|<\rho(t)$. Since $\mathrm{erf}(\cdot)\in (-1,1)$ for finite arguments,
\eqref{eq:ppf_transform} implies \eqref{eq:ppf_req} automatically for all times at which $\xi(t)$ remains finite.
Differentiating \eqref{eq:ppf_transform} yields
\begin{equation}
\dot x(t)=\dot\rho(t)\,\mathrm{erf}(\xi(t)) + \rho(t)\,\frac{2}{\sqrt{\pi}}e^{-\xi^2(t)}\dot\xi(t).
\label{eq:xdot_xi}
\end{equation}
Hence, the transformed dynamics can be expressed in the $\xi$-coordinate, and the controller is designed to regulate $\xi$
rather than $x$ directly.
Because the plant is subject to bounded matched uncertainties/disturbances, the analysis aims at finite-time reaching of a
prescribed tube in the transformed coordinate. Specifically, we design the controller such that:
\begin{enumerate}
\item (\emph{Finite-time reaching}) $\xi(t)$ enters the tube $\Omega_{\varepsilon}:=\{\xi:|\xi|\le \varepsilon\}$ in finite time,
i.e., there exists $T_r<\infty$ such that $|\xi(t)|\le \varepsilon$ for all $t\ge T_r$.
\item (\emph{Tube invariance}) Once $\xi(t)$ enters $\Omega_\varepsilon$, it remains in $\Omega_\varepsilon$ thereafter (forward
invariance).
\item (\emph{PPF feasibility}) If the initial condition is feasible, i.e., $|x(0)|<\rho(0)$, then $\xi(t)$ remains finite
for all $t\ge 0$, which implies $|x(t)|<\rho(t)$ for all $t\ge 0$.
\end{enumerate}

\section{PPF-Aware Hybrid-Gain Based Controller Design}\label{sec: proposed controller design}

\subsection{First-order system}
Consider the perturbed first-order system
\begin{equation}
\dot x(t) = u(t) + d(t)
\label{eq: first order}
\end{equation}
where the matched disturbance $d(t)$ satisfies \eqref{eq:disturbance}.
Using error transformation \eqref{eq:ppf_transform} and differentiating with respect to time yields \eqref{eq:xdot_xi}.
Substituting $\dot x=u+d$ from \eqref{eq: first order} and solving \eqref{eq:xdot_xi} for $\dot\xi$, leads to the transformed dynamics as
\begin{equation}
\dot\xi
=
\chi(t,\xi)\Big(u + d(t) - \dot \rho(t)\Psi(\xi)\Big)
\label{eq:xidot}
\end{equation}
where 
\begin{align}
 \chi(t,\xi) := \frac{1}{\rho(t)\Psi'(\xi)}
= \frac{\sqrt{\pi}}{2}\frac{e^{\xi^2}}{\rho(t)} > 0. 
 \label{eq: chi defined}
\end{align}
and define scaled disturbance $\bar d_\xi$ in transformed coordinate as
\begin{equation}
|\chi(t,\xi)d(t)|
\le
\frac{\sqrt{\pi}}{2}\frac{e^{\xi^2}}{\rho_\infty}\,d_{\max}
=: \bar d_\xi .
\label{eq:dbar_xi}
\end{equation}
Choose the control input
\begin{equation}
u(t)= \dot \rho(t)\Psi(\xi(t)) \;-\; \frac{1}{\chi(t,\xi(t))}\,G_{\mathrm{hyb}}(\xi)\,\sgn(\xi),
\label{eq: control law fos}
\end{equation}
which yields the closed-loop transformed dynamics
\begin{equation}
\dot\xi = -G_{\mathrm{hyb}}(\xi)\sgn(\xi) + \chi(t,\xi)d(t).
\label{eq:xi_cl}
\end{equation}
The hybrid gain $G_{\mathrm{hyb}}(\xi)$ is defined as 
\begin{align}
\begin{split}
 G_{\mathrm{hyb}}(\xi)=
\begin{cases}
G_{\mathrm{out}}(\xi)=k_0+k_1\dfrac{|\xi|^\gamma}{\varepsilon_0^\gamma+|\xi|^\gamma},
& |\xi|>\varepsilon,\\[2mm]
G_{\mathrm{in}}(\xi)=a|\xi|^\gamma+b|\xi|^\alpha,
& |\xi|\le \varepsilon
\end{cases}    
\end{split} 
\label{eq: Ghyb}
\end{align}
Here $k_0 > \bar d_\xi$
 and $\ k_1>0,\ 0<\gamma<1<\alpha $
are design constants, $\varepsilon\in(0,\varepsilon_0]$.

\begin{assumption}[A-priori bound and matched uncertainty]
\label{ass:xi_bound}
For any feasible initial condition $|x(0)|<\rho(0)$ (equivalently $\xi(0)$ finite), there exists a finite constant
$\bar\xi_0>0$ such that
\begin{equation}
|\xi(t)|\le \bar\xi_0,\qquad \forall t\in[0,T_r],
\label{eq:xi_apriori}
\end{equation}
where $T_r$ denotes the (finite) reaching time to the tube $\Omega_\varepsilon$ established in Theorem~1.
\end{assumption}
\begin{remark}[Interpretation and use of Assumption~\ref{ass:xi_bound}]
\label{rem:xi_bound}
Assumption~\ref{ass:xi_bound} is an a-priori boundedness requirement on the transformed coordinate during the \emph{pre-reaching} phase.
It does not require monotonic decrease of $|\xi|$; it only rules out finite-time blow-up of $\xi$ before the controller drives the
trajectory into $\Omega_\varepsilon$. This assumption is mild because feasibility $|x(0)|<\rho(0)$ ensures $\xi(0)$ is finite, and
the proposed control law yields a strictly negative bound on $\dot W$ (with $W:=|\xi|$) outside $\Omega_\varepsilon$ in Theorem~1.
Consequently, one may choose $\bar\xi_0:=\max\{|\xi(0)|,\varepsilon_0\}$ (or any conservative bound) for tuning the gains and deriving
explicit reaching-time estimates.
\end{remark}

\begin{theorem}\label{thm:first_order}
Consider the first-order system \eqref{eq: first order} under controller \eqref{eq: control law fos}.
Suppose $k_0 > \bar d_\xi$, then for any initial condition satisfying $|x(0)| < \rho(0)$, the closed loop system ensures
\begin{enumerate}
\item the state $x(t)$ satisfies the prescribed performance constraint $|x(t)| < \rho(t)$ for all $t\ge 0$.
\item the transformed state $\xi(t)$ reaches the tube $|\xi| \le \varepsilon$ in finite time.

\item under $d_{\max}>0$, $\xi(t)$ converges in finite time to the residual set
\begin{equation}
\Omega_\xi := \{\,|\xi|\le r\,\},
\qquad
a r^\gamma + b r^\alpha = \bar d_\xi(\varepsilon) .
\label{eq:residual_xi}
\end{equation}
\end{enumerate}
\end{theorem}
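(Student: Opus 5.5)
We will work entirely with the closed-loop transformed dynamics \eqref{eq:xi_cl} and the nonsmooth Lyapunov candidate $W:=|\xi|$. Away from $\xi=0$ its derivative is
\[
\dot W=\sgn(\xi)\dot\xi=-G_{\mathrm{hyb}}(\xi)+\sgn(\xi)\chi(t,\xi)d(t)\le -G_{\mathrm{hyb}}(\xi)+\bar d_\xi .
\]
At $\xi=0$ we will interpret this in the Filippov sense. The constant $\bar d_\xi$ in \eqref{eq:dbar_xi} depends on $\xi$ through $e^{\xi^2}$. Our first step is therefore to fix a uniform bound on the relevant range. During the pre-reaching phase we use Assumption~\ref{ass:xi_bound} and set $\bar d_\xi$ evaluated at $\bar\xi_0$. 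Inside the tube we use $\bar d_\xi(\varepsilon)=\frac{\sqrt\pi}{2}\frac{e^{\varepsilon^2}}{\rho_\infty}d_{\max}$. The hypothesis $k_0>\bar d_\xi$ is read as $k_0>\bar d_\xi(\bar\xi_0)$. Since $\rho(t)\ge\rho_\infty$, the bound $|\chi d|\le \bar d_\xi$ then holds uniformly in $t$.

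\textbf{Item 2 (reaching).} For $|\xi|>\varepsilon$ we have $G_{\mathrm{out}}(\xi)\ge k_0$. Hence
\[
\dot W\le -(k_0-\bar d_\xi(\bar\xi_0))=:-\eta<0 .
\]
Integrating gives $W(t)\le W(0)-\eta t$, so the tube is reached no later than $T_r\le (|\xi(0)|-\varepsilon)/\eta$. The same strict decrease shows that $W$ is nonincreasing on $[0,T_r]$. This gives Assumption~\ref{ass:xi_bound} with $\bar\xi_0=\max\{|\xi(0)|,\varepsilon_0\}$, and it removes any circularity, since the bound used for $\bar d_\xi$ is consistent with the trajectory. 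On the boundary $|\xi|=\varepsilon$ the outer gain still dominates, so $\dot W<0$ and trajectories cannot exit the tube. This is the tube-invariance property.

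\textbf{Item 1 (PPF).} On $[0,T_r]$ the previous step gives $|\xi(t)|\le \bar\xi_0$. After $T_r$ we have $|\xi(t)|\le\varepsilon$. Thus $\xi$ stays finite on every bounded interval, and the solution of \eqref{eq:xi_cl} exists globally, because the right-hand side is locally bounded and there is no finite escape. Since $|\mathrm{erf}(\xi)|<1$ for finite $\xi$, \eqref{eq:ppf_transform} gives $|x(t)|=\rho(t)|\mathrm{erf}(\xi(t))|<\rho(t)$ for all $t\ge0$.

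\textbf{Item 3 (residual set).} Inside the tube, $\dot W\le -(a W^\gamma+bW^\alpha)+\bar d_\xi(\varepsilon)$. Let $h(W)=aW^\gamma+bW^\alpha$. This function is strictly increasing and continuous with $h(0)=0$, so $r$ in \eqref{eq:residual_xi} is uniquely defined. We assume $r<\varepsilon$; otherwise the claim is immediate from the tube result.

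For any $\theta\in(0,1)$ and $W\ge r_\theta:=h^{-1}(\bar d_\xi(\varepsilon)/\theta)$, we get $\dot W\le -(1-\theta)(aW^\gamma+bW^\alpha)$. The $W^\gamma$ term with $\gamma<1$ gives finite-time decrease in the usual way: $\dot W\le -(1-\theta)aW^\gamma$ yields a settling time bounded by $\varepsilon^{1-\gamma}/((1-\theta)a(1-\gamma))$. Letting $\theta\to1$, or more directly using the strict decrease of $W$ wherever $h(W)>\bar d_\xi$, shows that $W$ enters $\{W\le r+\delta\}$ in finite time for every $\delta>0$ and that the set $\{W\le r\}$ is forward invariant.

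The main obstacle is getting finite-time convergence to exactly $r$ rather than to $r+\delta$. Near $W=r$ the decrease rate $h(W)-\bar d_\xi$ vanishes linearly, so the time to reach $r$ itself is not obviously finite. I would handle this by stating the result as practical finite-time convergence to any neighbourhood of $\Omega_\xi$, following the $\theta$-scaling argument of Bhat--Bernstein / Yu type. If exact arrival is needed, I would note that the worst-case disturbance bound is attained only transiently, so $\Omega_\xi$ serves as the ultimate bound. A secondary concern is that $G_{\mathrm{hyb}}$ is discontinuous at $|\xi|=\varepsilon$; solutions must then be taken in the Filippov sense, and the sign argument above covers the boundary.
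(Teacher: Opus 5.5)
Your arguments for items~1 and~2 are sound, and in one respect tighter than the paper's. The first-exit/monotonicity argument with $\bar\xi_0=\max\{|\xi(0)|,\varepsilon_0\}$ derives Assumption~\ref{ass:xi_bound} instead of assuming it. Tube invariance also already follows from $\dot W<0$ just outside the tube, without the inner-gain condition of Lemma~\ref{lem:inner_invariance}. Phrase that as a first-exit argument, though: by definition $G_{\mathrm{hyb}}=G_{\mathrm{in}}$ at $|\xi|=\varepsilon$, so "the outer gain dominates on the boundary" is not literally true. Your single-rate reaching bound $(|\xi(0)|-\varepsilon)/\eta$ is a valid substitute for the paper's $T_A+T_B$ split; the paper's version just also shows the effect of $k_1$.

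\textbf{Gap in item~3.} You do not prove finite-time convergence to $\Omega_\xi$. Instead you weaken it to practical convergence to $\{|\xi|\le r+\delta\}$, and the remark that the worst case is attained "only transiently" is not an argument. Your diagnosis is correct for the frozen bound you chose: the comparison system $\dot W=-h(W)+\bar d_\xi(\varepsilon)$ approaches $r$ only exponentially. The paper does not close this either. Lemma~\ref{lem:inner_invariance} gives only monotone decrease toward $\Omega_\xi$, and \eqref{eq:Tin-mixedpower} is the nominal fixed-time estimate for $\dot W\le -aW^\gamma-bW^\alpha$, which ignores the perturbation.

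\textbf{Missing idea.} The perturbation bound in $\xi$-coordinates is state dependent,
\[
|\chi(t,\xi)d(t)|\le \frac{\sqrt\pi}{2}\frac{d_{\max}}{\rho_\infty}e^{W^2},
\]
and it is strictly smaller than $\bar d_\xi(\varepsilon)$ whenever $W<\varepsilon$. Assume $r<\varepsilon$; otherwise item~3 is trivial. Using $G_{\mathrm{in}}(r)=\bar d_\xi(\varepsilon)$, for $r\le W\le\varepsilon$ you get
\[
\dot W\le -\phi(W),\qquad
\phi(W):=\big(G_{\mathrm{in}}(W)-G_{\mathrm{in}}(r)\big)+\frac{\sqrt\pi}{2}\frac{d_{\max}}{\rho_\infty}\big(e^{\varepsilon^2}-e^{W^2}\big).
\]
Both brackets are nonnegative on $[r,\varepsilon]$. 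The first vanishes only at $W=r$, and the second, since $d_{\max}>0$, only at $W=\varepsilon$. So $\phi$ is continuous and strictly positive on this compact interval. With $\kappa:=\min_{[r,\varepsilon]}\phi>0$, $|\xi|$ reaches $r$ within $(\varepsilon-r)/\kappa$ after $T_r$, which is exactly the finite-time claim. The correct form of your "transient" intuition is spatial, not temporal: $\bar d_\xi(\varepsilon)$ is attained only at $|\xi|=\varepsilon$ (with $\rho=\rho_\infty$), never at $|\xi|=r$.
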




\begin{proof}
Consider a Lyapunov function
\begin{equation}
V(\xi)=\frac{1}{2}\xi^2.
\label{eq:V}
\end{equation}
Then $\dot V = \xi\dot\xi$. Using \eqref{eq:xi_cl},
\begin{align}
\dot V
&= \xi\Big(-G_{\mathrm{hyb}}(\xi)\sgn(\xi) + \chi(t,\xi)d(t)\Big)\nonumber\\
&= -G_{\mathrm{hyb}}(\xi)|\xi| + \chi(t,\xi)d(t)\,\xi \nonumber\\
&\le -G_{\mathrm{hyb}}(\xi)|\xi| + |\chi(t,\xi)d(t)|\,|\xi|.
\label{eq:Vdot_bd}
\end{align}
By using \eqref{eq:dbar_xi}, we obtain
\begin{equation}
\dot V \le -\big(G_{\mathrm{hyb}}(\xi)-\bar d_\xi\big)|\xi|.
\label{eq:Vdot_bd2 fos}
\end{equation}

\emph{Case 1: $|\xi|>\varepsilon$}\newline 

From \eqref{eq: Ghyb}, for $|\xi|>\varepsilon$, $G_{\text{hyb}}=G_{\text{out}}$, and thus using \eqref{eq:Vdot_bd2 fos} we have
\begin{equation}
\dot V \le -\big(G_{\text{out}}(|\xi|)-\bar d_\xi(\xi_0) \big)|\xi|.
\label{eq:Vdot_out_start fos}
\end{equation}
Let
\begin{equation}
\eta_0 = k_0-\bar d_\xi(\xi_0),
\qquad \text{and } \qquad \eta_0>0 \ \ (\Leftrightarrow\ k_0>\bar d_\xi(\xi_0)).
\label{eq:eta0_def}
\end{equation}
Using $G_{\text{out}}(\xi)=k_0+k_1\frac{|\xi|^\gamma}{\varepsilon_0^\gamma+|\xi|^\gamma}$,
\eqref{eq:Vdot_out_start fos} yields
\begin{equation}
\dot V \le -\left(\eta_0 + k_1\frac{|\xi|^\gamma}{\varepsilon_0^\gamma+|\xi|^\gamma}\right)|\xi|.
\label{eq:Vdot_out fos}
\end{equation}
Equivalently, in terms of $W=|\xi|$ (such that $V=\frac12 W^2$), we obtain the scalar inequality \eqref{eq:Vdot_out fos} as
\begin{equation}
\dot W \le -\eta_0 - k_1\frac{W^\gamma}{\varepsilon_0^\gamma+W^\gamma},
\qquad W> \varepsilon.
\label{eq:Wdot_out first order}
\end{equation}
To obtain an explicit bound, we split into two subregions.
\paragraph{Region A- $W \ge \varepsilon_0$}
For $W\ge \varepsilon_0$,
\[
\frac{W^\gamma}{\varepsilon_0^\gamma+W^\gamma} \ge \frac{1}{2},
\]
hence from \eqref{eq:Wdot_out first order}
\begin{equation}
\dot W \le -\left(\eta_0+\frac{k_1}{2}\right).
\end{equation}
Therefore, the time to reach $W=\varepsilon_0$ from $W(0)=W_0$ satisfies
\begin{equation}
T_A \le \frac{(W_0-\varepsilon_0)}{\eta_0+k_1/2}.
\label{eq:TA}
\end{equation}

\paragraph{Region B- $\varepsilon < W \le \varepsilon_0$}
For $\epsilon<W\le \varepsilon_0$,
\begin{align}
  \varepsilon_0^\gamma+W^\gamma \le 2\varepsilon_0^\gamma
\quad\Rightarrow\quad
\frac{W^\gamma}{\varepsilon_0^\gamma+W^\gamma} \ge \frac{W^\gamma}{2\varepsilon_0^\gamma}. 
\label{eq: region B simplify}
\end{align}
Thus using \eqref{eq:Wdot_out first order} and \eqref{eq: region B simplify} imply
\begin{equation}
\dot W \le -\eta_0 - \frac{k_1}{2\varepsilon_0^\gamma}W^\gamma.
\label{eq:Wdot_out_B}
\end{equation}
Dropping the $-\eta_0$ term gives a conservative but explicit bound
\begin{align}
  \dot W \le - \frac{k_1}{2\varepsilon_0^\gamma}W^\gamma. 
  \label{eq:Wdot_out_B simplifies}
\end{align}
Integrating from $W=\varepsilon_0$ down to $W=\varepsilon$ yields
\begin{equation}
T_B \le \frac{2\varepsilon_0^\gamma}{k_1(1-\gamma)}
\left(\varepsilon_0^{1-\gamma}-\varepsilon^{1-\gamma}\right).
\label{eq:TB}
\end{equation}
Hence, the total time to reach the inner set $|\xi|\le \varepsilon$ satisfies
\begin{equation}
T_{\text{out}} \le T_A + T_B,
\label{eq:Tout_split}
\end{equation}
where $T_A$ and $T_B$ are given by \eqref{eq:TA} and \eqref{eq:TB}, respectively.

\emph{Case 2: $|\xi|\le \varepsilon$}\newline 
From \eqref{eq: Ghyb}, $G_{\text{hyb}}=G_{\text{in}}$, therefore using \eqref{eq:Vdot_bd2 fos}
\begin{equation}
\dot V\le-\big(a|\xi|^\gamma+b|\xi|^\alpha-\bar d_\xi(\varepsilon)\big)|\xi|.
\label{eq:Vdot_in fos}
\end{equation}
By Lemma \ref{lem:inner_invariance}, choosing $G_{\text{in}}  $ such that $G_{\text{in}}(\epsilon) > \bar d_\xi(\varepsilon)$ the inner tube is invariant and the trajectories converge in finite time to the residual set  $\Omega_\xi := \{\,|\xi|\le r\,\}$ characterized by $a r^\gamma + b r^\alpha = \bar d_\xi(\varepsilon)$. \eqref{eq:Vdot_in fos} follows the standard fixed time estimate as
\begin{equation}
T_{\mathrm{in}} \;\le\; \frac{1}{a(1-\gamma)} + \frac{1}{b(\alpha-1)}.
\label{eq:Tin-mixedpower}
\end{equation}
Therefore, the total settling time satisfies
$T_{\mathrm{tot}} \le T_{\mathrm{out}} + T_{\mathrm{in}}$.
\end{proof}
\begin{lemma}\label{lem:inner_invariance}
Consider the scalar reaching dynamics
\begin{equation}
\dot \xi = -G_{\mathrm{in}}(|\xi|)\sgn(\xi) + d(t),
\qquad |d(t)|\le \bar d_\xi(\varepsilon),
\label{eq:zeta_inner}
\end{equation}
where $G_{\mathrm{in}}(\cdot)$ is continuous and strictly increasing on $[0,\varepsilon]$.
If there exists $\eta_\varepsilon>0$ such that
\begin{equation}
G_{\mathrm{in}}(\varepsilon)\ \ge\ \bar d_\xi(\varepsilon)+\eta_\varepsilon,
\label{eq:tube_cond}
\end{equation}
then the set $\mathcal{T}_\varepsilon:=\{|\xi|\le\varepsilon\}$ is robustly invariant.
Moreover, trajectories enter and remain in the residual set
\begin{equation}
\Omega_\xi:=\{|\xi|\le r\},
\qquad r:=\inf\{\varrho\in(0,\varepsilon]: G_{\mathrm{in}}(\varrho)\ge \bar d_\xi(\varepsilon)\}.
\label{eq:residual_general}
\end{equation}
\end{lemma}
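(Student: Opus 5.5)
The plan is to work with the scalar function $W:=|\xi|$ and the comparison principle. I will use the upper Dini derivative, or equivalently $\dot V=\xi\dot\xi$ with $V=\tfrac12\xi^2$, so that the discontinuity of $\sgn(\xi)$ at the origin causes no trouble. Solutions are understood in the Filippov sense; at $\xi=0$ the set-valued sign function contains zero, and this does not affect the estimates below. From \eqref{eq:zeta_inner}, for $\xi\neq0$,
\[
\dot V=-G_{\mathrm{in}}(|\xi|)|\xi|+d(t)\xi\le-\big(G_{\mathrm{in}}(|\xi|)-\bar d_\xi(\varepsilon)\big)|\xi| .
\]
In terms of $W$ this gives
\[
D^+W\le-\big(G_{\mathrm{in}}(W)-\bar d_\xi(\varepsilon)\big)\quad\text{whenever } W>0 .
\]
Everything below follows from this differential inequality together with the monotonicity of $G_{\mathrm{in}}$.

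\textbf{Robust invariance of $\mathcal T_\varepsilon$.} On the boundary $W=\varepsilon$, condition \eqref{eq:tube_cond} gives $D^+W\le-\eta_\varepsilon<0$. I argue by contradiction. Suppose a trajectory starting in $\mathcal T_\varepsilon$ leaves it, and let $t^\ast$ be the last time with $W(t^\ast)=\varepsilon$ before the exit. Because $\xi$ is absolutely continuous, $W$ is continuous. By continuity of $G_{\mathrm{in}}$, the strict inequality $G_{\mathrm{in}}(W)-\bar d_\xi(\varepsilon)\ge\eta_\varepsilon/2$ persists on a neighbourhood $W\in[\varepsilon,\varepsilon+\delta)$. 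Hence $W$ is strictly decreasing on a right neighbourhood of $t^\ast$, which contradicts the exit. This holds for every admissible disturbance $|d|\le\bar d_\xi(\varepsilon)$, so the invariance is robust. One point needs care: $G_{\mathrm{in}}$ is only specified on $[0,\varepsilon]$. In the theorem's setting, the vector field just outside the tube is $G_{\mathrm{out}}\ge k_0>\bar d_\xi$. The cleaner route is therefore to note that $D^+W<0$ holds on both sides of $W=\varepsilon$, using \eqref{eq:tube_cond} inside and $G_{\mathrm{out}}$ outside.

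\textbf{Entry into $\Omega_\xi$.} Define $r$ as in \eqref{eq:residual_general}. It is well defined and satisfies $r\le\varepsilon$ because $G_{\mathrm{in}}(\varepsilon)\ge\bar d_\xi(\varepsilon)$. Since $G_{\mathrm{in}}$ is continuous and strictly increasing, $G_{\mathrm{in}}(W)>\bar d_\xi(\varepsilon)$ for every $W\in(r,\varepsilon]$. Invariance of $\Omega_\xi$ follows by the same boundary argument as above, now at $W=r$. The only difference is that at $W=r$ we may have $D^+W\le0$ rather than strictly negative, so I will use the comparison lemma on the intervals where $W>r$ to show that $W$ is nonincreasing there.

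\textbf{Finite-time reaching of $\Omega_\xi$.} Take $\delta>0$ and any starting point $W_0\in(r+\delta,\varepsilon]$. On the set $W\ge r+\delta$ we have the uniform decrease $D^+W\le-\mu_\delta$, where $\mu_\delta:=G_{\mathrm{in}}(r+\delta)-\bar d_\xi(\varepsilon)>0$. Hence $\{|\xi|\le r+\delta\}$ is reached within time $(\varepsilon-r)/\mu_\delta$. This proves finite-time entry into every neighbourhood of $\Omega_\xi$, and convergence to $\Omega_\xi$ itself.

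\textbf{Main obstacle.} The hard step is to claim exact finite-time entry into $\Omega_\xi$, rather than into $\Omega_\xi$ enlarged by $\delta$. The decrease margin $G_{\mathrm{in}}(W)-\bar d_\xi$ vanishes as $W\downarrow r$, so the bound $(\varepsilon-r)/\mu_\delta$ blows up as $\delta\to0$. My first attempt would be to write
\[
T\le\int_r^\varepsilon\frac{dW}{G_{\mathrm{in}}(W)-\bar d_\xi(\varepsilon)}
\]
and check whether this integral converges. For $G_{\mathrm{in}}=aW^\gamma+bW^\alpha$ the margin is generically linear in $W-r$, so the integral diverges logarithmically. I would therefore state the residual conclusion as ``enter every $\delta$-neighbourhood in finite time and remain there, converging to $\Omega_\xi$''. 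Alternatively, I would require a strict margin: replace $\bar d_\xi(\varepsilon)$ by $\bar d_\xi(\varepsilon)+\eta$ in the definition of $r$, which restores finite-time entry with explicit time $(\varepsilon-r)/\eta$.
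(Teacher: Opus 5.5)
\textbf{Same route as the paper.} Your argument is the paper's argument, done more carefully. The paper also takes $V=\frac12\xi^2$ and derives $\dot V\le-\big(G_{\mathrm{in}}(|\xi|)-\bar d_\xi(\varepsilon)\big)|\xi|$. It checks the sign at $|\xi|=\varepsilon$ to get invariance of $\mathcal{T}_\varepsilon$, and it notes that $G_{\mathrm{in}}(|\xi|)>\bar d_\xi(\varepsilon)$ whenever $|\xi|>r$. You fill in steps the paper leaves implicit in three places:
\begin{itemize}
\item you run the boundary argument on a one-sided neighbourhood;
\item you observe that $G_{\mathrm{in}}$ is only given on $[0,\varepsilon]$, so ``exit'' needs an extension (e.g.\ by $G_{\mathrm{out}}$) before it even makes sense;
\item you prove forward invariance of $\Omega_\xi$ explicitly.
\end{itemize}

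\textbf{Your main obstacle is real.} The one place you depart from the paper is your ``main obstacle'', and there you are right and the paper is not. Its proof ends with ``$|\xi|$ decreases until it reaches $\Omega_\xi$''. That does not follow from $\dot V<0$ on $\{|\xi|>r\}$. Moreover, the finite-time reading of ``enter'', which is how the proof of Theorem~\ref{thm:first_order} uses the lemma, is false for the mixed-power gain the paper actually uses.

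\textbf{Counterexample.} Take the following ingredients:
\begin{itemize}
\item $G_{\mathrm{in}}(\varrho)=a\varrho^\gamma+b\varrho^\alpha$ with $\bar d_\xi(\varepsilon)>0$, so that $0<r<\varepsilon$ and $G_{\mathrm{in}}(r)=\bar d_\xi(\varepsilon)$;
\item the admissible constant disturbance $d(t)\equiv\bar d_\xi(\varepsilon)$;
\item an initial condition $\xi(0)\in(r,\varepsilon]$.
\end{itemize}
On $\xi>0$ the dynamics read $\dot\xi=-\big(G_{\mathrm{in}}(\xi)-\bar d_\xi(\varepsilon)\big)$. The right-hand side is smooth on $(0,\varepsilon]$ and vanishes at $\xi=r$. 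Since $\xi\equiv r$ is itself a solution, uniqueness forces $\xi(t)\in(r,\xi(0)]$ for all $t\ge0$. So the trajectory converges to $\Omega_\xi$ but never enters it.

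\textbf{How to phrase the obstacle.} State it through this example, not through the divergent integral alone: a divergent upper bound on the reaching time proves nothing by itself. For this particular $d$, however, your comparison inequality holds with equality, so your integral is the true reaching time and its divergence is decisive.

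\textbf{What is actually true.} The correct statement is the one you propose:
\begin{itemize}
\item robust invariance of $\mathcal{T}_\varepsilon$ and of $\Omega_\xi$;
\item entry into every $\{|\xi|\le r+\delta\}$ within time $(\varepsilon-r)/\mu_\delta$, hence convergence to $\Omega_\xi$;
\item alternatively, genuine finite-time entry if $r$ is defined with $\bar d_\xi(\varepsilon)+\eta$ in place of $\bar d_\xi(\varepsilon)$.
\end{itemize}
The same example with $d\equiv d_{\max}$ applies verbatim to the unscaled sliding dynamics \eqref{eq:sdot_closed}. So the finite-time residual claims in Theorem~\ref{thm:second_order} and in the subsequent quasi-sliding remark need the same correction.
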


\begin{proof}
Let $V(\xi)=\frac12\xi^2$. Using \eqref{eq:zeta_inner} and $|d(t)|\le \bar d_\xi(\varepsilon)$ gives
\[
\dot V = \xi\dot\xi
\le -G_{\mathrm{in}}(|\xi|)|\xi| + \bar d_\xi(\varepsilon)|\xi|
= -\big(G_{\mathrm{in}}(|\xi|)-\bar d_\xi(\varepsilon)\big)|\xi|.
\]
At the tube boundary $|\xi|=\varepsilon$, condition \eqref{eq:tube_cond} implies
$\dot V \le -\eta_\varepsilon \varepsilon<0$, hence trajectories cannot exit $\mathcal{T}_\varepsilon$,
proving robust invariance. In addition, whenever $|\xi|>r$ we have
$G_{\mathrm{in}}(|\xi|)>\bar d_\xi(\varepsilon)$, hence $\dot V<0$ and $|\xi|$ decreases until it reaches
$\Omega_\xi$, where $G_{\mathrm{in}}(|\xi|)\le \bar d_\xi(\varepsilon)$ may prevent further decrease. We distinguish between the outer tube $|\xi| \le \varepsilon$ and the residual set $|\xi| \le r$, where $r \le \varepsilon$.
\end{proof}
\begin{corollary}[Gaussian inner gain]

If the inner-region gain is instead chosen as
\begin{equation}
G_{\text{in}}(\xi)=\Lambda\sqrt{\frac{\pi}{2}}e^{-\frac{\xi^2}{2}},
\label{eq:Gin_gauss}
\end{equation}
then
\begin{equation}
\dot V\le-\big(\Lambda\sqrt{\tfrac{\pi}{2}}-\bar d_\xi(\varepsilon)\big)|\xi|.
\end{equation}
For $\Lambda\sqrt{\frac{\pi}{2}}>\bar d_\xi(\varepsilon)$, the inner-region settling-time bound becomes
\begin{equation}
T_{\text{in}}^{\text{gauss}}
\le
\frac{\varepsilon_0}{\Lambda\sqrt{\frac{\pi}{2}}-\bar d_\xi(\varepsilon)}.
\end{equation}
Note that the Gaussian inner gain yields a uniform lower bound $G_{\text{in}}(\xi)\geq \Lambda\sqrt{\frac{\pi}{2}}e^{-\frac{\xi^2}{2}}$ for $|\xi| \leq \epsilon $.
\end{corollary}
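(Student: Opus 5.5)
The plan is to reuse the Lyapunov function $V(\xi)=\frac12\xi^2$ from the proof of Theorem~\ref{thm:first_order}. I will work on the inner region $|\xi|\le\varepsilon$, where the closed loop \eqref{eq:xi_cl} takes the form \eqref{eq:zeta_inner} with $G_{\mathrm{in}}$ replaced by the Gaussian gain \eqref{eq:Gin_gauss}.

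First, exactly as in \eqref{eq:Vdot_bd}--\eqref{eq:Vdot_bd2 fos}, I will write
\[
\dot V=-G_{\mathrm{in}}(\xi)|\xi|+\chi d\,\xi\le-\big(G_{\mathrm{in}}(\xi)-\bar d_\xi(\varepsilon)\big)|\xi| .
\]
The key step is to bound $G_{\mathrm{in}}$ from below uniformly on the tube. Since $e^{-\xi^2/2}$ is decreasing in $|\xi|$, for $|\xi|\le\varepsilon$ one has $G_{\mathrm{in}}(\xi)\ge\Lambda\sqrt{\pi/2}\,e^{-\varepsilon^2/2}$.

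This is where I expect the main obstacle. The stated inequality uses the constant $\Lambda\sqrt{\pi/2}$ rather than $\Lambda\sqrt{\pi/2}\,e^{-\varepsilon^2/2}$, and the former is an upper bound on the gain, not a lower bound. I would handle this in one of two ways. The first option is to interpret $\Lambda$ as absorbing the factor, i.e.\ to redefine the effective constant $\Lambda_\varepsilon:=\Lambda e^{-\varepsilon^2/2}$. The second option is to note that for small $\varepsilon$ the factor is $1-O(\varepsilon^2)$, and then state the result with $\Lambda\sqrt{\pi/2}\,e^{-\varepsilon^2/2}$ in place of $\Lambda\sqrt{\pi/2}$. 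Either way, the condition becomes $\Lambda\sqrt{\pi/2}\,e^{-\varepsilon^2/2}>\bar d_\xi(\varepsilon)$, and I would prove the corollary in that corrected form, remarking that it reduces to the displayed one up to this factor.

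With a uniform margin $\eta:=\Lambda\sqrt{\pi/2}\,e^{-\varepsilon^2/2}-\bar d_\xi(\varepsilon)>0$, I switch to $W=|\xi|$, using $\dot V=W\dot W$, to obtain $\dot W\le-\eta$ for $0<W\le\varepsilon$. This yields two conclusions.
\begin{itemize}
\item \textbf{Invariance.} The tube is invariant, by Lemma~\ref{lem:inner_invariance}. Its hypothesis \eqref{eq:tube_cond} holds with $\eta_\varepsilon=\eta$. The lemma asks for $G_{\mathrm{in}}$ to be increasing, which the Gaussian gain is not, but its proof only uses the boundary inequality, so I will invoke that argument directly.
\item \textbf{Finite-time convergence.} Comparison with $\dot W=-\eta$ gives $W(t)\le W(T_r)-\eta(t-T_r)$, so $\xi$ reaches the origin within $W(T_r)/\eta\le\varepsilon/\eta\le\varepsilon_0/\eta$, using $\varepsilon\le\varepsilon_0$. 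This gives the stated bound on $T_{\mathrm{in}}^{\mathrm{gauss}}$.
\end{itemize}
Because the bound on $\dot W$ is uniform down to the origin, there is no residual ball: the residual radius $r$ of \eqref{eq:residual_general} is zero. I will note this as the contrast with the power-type inner gain.

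Finally, I will check that the outer-region analysis (Case 1 of Theorem~\ref{thm:first_order}) is unchanged. I will also record the uniform lower bound $G_{\mathrm{in}}\ge\Lambda\sqrt{\pi/2}\,e^{-\varepsilon^2/2}$ as the precise version of the closing note of the corollary.
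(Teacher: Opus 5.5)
Your route is the one the paper implicitly relies on. The corollary has no separate proof: it substitutes the Gaussian gain into the inner-region bound $\dot V\le-\big(G_{\mathrm{in}}(\xi)-\bar d_\xi(\varepsilon)\big)|\xi|$ and integrates a constant decay rate of $W=|\xi|$ over a distance of at most $\varepsilon_0$.

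Your objection is also correct. The displayed inequality silently uses $G_{\mathrm{in}}(\xi)\ge\Lambda\sqrt{\pi/2}$, which is backwards since $e^{-\xi^2/2}\le 1$. The margin the argument actually yields is your $\eta=\Lambda\sqrt{\pi/2}\,e^{-\varepsilon^2/2}-\bar d_\xi(\varepsilon)$, with settling time at most $\varepsilon/\eta\le\varepsilon_0/\eta$.

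The correction is not cosmetic. The stated hypothesis $\Lambda\sqrt{\pi/2}>\bar d_\xi(\varepsilon)$ does not imply the boundary condition \eqref{eq:tube_cond} of Lemma~\ref{lem:inner_invariance}. Suppose $\Lambda\sqrt{\pi/2}\,e^{-\varepsilon^2/2}<\bar d_\xi(\varepsilon)<\Lambda\sqrt{\pi/2}$, and $\rho_0/\rho_\infty$ is close enough to $1$. Then $\chi(t,\varepsilon)d_{\max}>G_{\mathrm{in}}(\varepsilon)$ for all $t\ge 0$. With $d=d_{\max}\sgn(\xi)$, a trajectory starting on $|\xi|=\varepsilon$ has $\frac{d}{dt}|\xi|>0$ there. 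It is therefore held at the tube boundary and never settles, so both the displayed $\dot V$ bound and the stated settling-time estimate fail.

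So state the result under your corrected hypothesis, i.e.\ your second option, rather than as a reinterpretation of $\Lambda$. Your other remarks are right:
\begin{itemize}
\item only the boundary inequality of Lemma~\ref{lem:inner_invariance} is needed, since the Gaussian is decreasing rather than increasing;
\item the residual radius is $r=0$ here;
\item the closing ``uniform lower bound'' in the statement is an identity, and its meaningful form is $\Lambda\sqrt{\pi/2}\,e^{-\varepsilon^2/2}$.
\end{itemize}
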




\subsection{Second-order system}
Consider the second-order system
\begin{equation}
\dot e_1 = e_2, \qquad
\dot e_2 = f(e_1,e_2) + u + d(t)
\label{eq:so_sys}
\end{equation}
where $f(\cdot) = -2\zeta\omega_n e_2 - \omega_n^2 e_1$ is known $(\zeta,\omega_n) > 0$ and $d(t)$ satisfies Assumption~\ref{ass:dmax}.
Following \eqref{eq:ppf_transform}, we have
\begin{align}
e_1(t)=\rho(t)\Psi(\xi(t))
\label{eq:erf_map sos}
\end{align}
Hence, for any finite $\xi(t)$, one has
\begin{equation}
|e_1(t)|=|\rho(t)\Psi(\xi(t))|
=|\rho(t)\erf(\xi(t))|
<|\rho(t)|
\label{eq:ppf_hold}
\end{equation}
which guarantees the prescribed performance constraint.
Differentiating \eqref{eq:erf_map sos} with respect to time and using $\dot e_1=e_2$ yields
\begin{equation}
e_2
=\dot\rho(t)\Psi(\xi)
+\rho(t)\Psi'(\xi)\dot\xi
\end{equation}
and therefore the transformed error dynamics are obtained as
\begin{equation}
\dot\xi
=\frac{e_2-\dot\rho(t)\Psi(\xi)}{\rho(t)\Psi'(\xi)}
=\frac{\sqrt{\pi}}{2}\frac{e^{\xi^2}}{\rho(t)}
\Big(e_2-\dot\rho(t)\erf(\xi)\Big).
\label{eq:xi_dyn sos}
\end{equation}
Define the PPF-aware sliding variable as
\begin{equation}
s(t) = e_2(t) + c\,\Psi(\xi(t)),
\qquad c>0.
\label{eq:s_def sos}
\end{equation}
For the erf transformation function
\begin{equation}
\dot\Psi(\xi) = \frac{d}{dt}\Psi(\xi)=\Psi'(\xi)\dot\xi=\frac{e_2-\dot\rho(t)\Psi(\xi)}{\rho(t)}.
\label{eq:Psi_dot sos}
\end{equation}
Therefore, using \eqref{eq:xi_dyn sos} --\eqref{eq:Psi_dot sos}, we obtain
\begin{align}
\dot s
&= \dot e_2 + c\,\dot\Psi(\xi)
\nonumber\\
&= \big(-2\zeta\omega_n e_2-\omega_n^2 e_1+u+d(t)\big)
+ \frac{c}{\rho(t)}\big(e_2-\dot\rho(t)\Psi(\xi)\big).
\label{eq:sdot_raw}
\end{align}
Substituting $e_1=\rho\Psi(\xi)$ gives
\begin{equation}
\dot s
= u - \omega_n^2\rho(t)\Psi(\xi)
+\Big(-2\zeta\omega_n+\frac{c}{\rho(t)}\Big)e_2
-\frac{c\dot\rho(t)}{\rho(t)}\Psi(\xi)
+d(t).
\label{eq:sdot_simpl}
\end{equation}
Choose the control input
\begin{align}
    \begin{split}
 u
=\omega_n^2\rho(t)\Psi(\xi)
+\Big(2\zeta\omega_n-\frac{c}{\rho(t)}\Big)e_2
+\frac{c\dot\rho(t)}{\rho(t)}\Psi(\xi) \\
- G_{\mathrm{hyb}}(s)\,\sgn(s)       
    \end{split}
    \label{eq:u_law sos}
\end{align}
which yields the closed-loop sliding dynamics
\begin{equation}
\dot s = -G_{\mathrm{hyb}}(s)\,\sgn(s)+d(t).
\label{eq:sdot_closed}
\end{equation}
with the same hybrid gain \eqref{eq: Ghyb}. Unlike the transformed first-order dynamics, the sliding dynamics \eqref{eq:sdot_closed} are directly perturbed by $d(t)$, hence no transformation-induced scaling appears, and the effective disturbance bound remains $d_{\max}$.
The following result shows that, by an appropriate choice of the control input, the second-order system can be reduced to a first-order robust sliding dynamics identical in structure to the transformed system analysed in Theorem \ref{thm:first_order}.

\begin{theorem}\label{thm:second_order}
   Consider the system dynamics \eqref{eq:so_sys} under the control law \eqref{eq:u_law sos}. If $|k_0|> d_{\max}$ then 
   
\begin{enumerate}
\item the position error $e_1(t)$ satisfies the prescribed performance constraint for all $t\ge 0$.

\item the sliding variable $s(t)$ reaches a neighborhood of the origin in finite time which inherits the outer-region time bound from Theorem~\ref{thm:first_order} as given in \eqref{eq:TA} and \eqref{eq:TB}, respectively.

\item 
For $|s|\le\varepsilon$, under $\bar d=d_{\max}>0$, the trajectory converges in finite time to the residual set
\begin{equation}
\Omega_r := \{\,|s|\le r\,\},
\qquad
a r^\gamma + b r^\alpha = \bar d
\label{eq:residual_set sos}
\end{equation}
\end{enumerate}
\end{theorem}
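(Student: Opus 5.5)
The plan is to reduce everything to the scalar sliding dynamics \eqref{eq:sdot_closed}, $\dot s=-G_{\mathrm{hyb}}(s)\sgn(s)+d(t)$ with $|d|\le d_{\max}$. This is the transformed first-order closed loop \eqref{eq:xi_cl} with $\chi d$ replaced by $d$ and $\bar d_\xi$ replaced by the constant $d_{\max}$. I then reuse the proof of Theorem~\ref{thm:first_order} and Lemma~\ref{lem:inner_invariance} essentially verbatim for items 2 and 3. Item 1 gets a separate argument based on the erf parametrization. The order is: items 2--3 first, then item 1.

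\textbf{Items 2 and 3.} Take $V=\tfrac12 s^2$ and $W=|s|$. Exactly as in \eqref{eq:Vdot_bd}--\eqref{eq:Vdot_bd2 fos},
\[
\dot V\le -\big(G_{\mathrm{hyb}}(s)-d_{\max}\big)|s| .
\]
This step is easier than the first-order case, because the disturbance bound is a constant and does not depend on the state, so Assumption~\ref{ass:xi_bound} is not needed for $s$.

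For $|s|>\varepsilon$, set $\eta_0=k_0-d_{\max}>0$. We obtain
\[
\dot W\le -\eta_0-k_1\frac{W^\gamma}{\varepsilon_0^\gamma+W^\gamma}.
\]
Splitting into Region A ($W\ge\varepsilon_0$) and Region B ($\varepsilon<W\le\varepsilon_0$) reproduces the bounds \eqref{eq:TA} and \eqref{eq:TB} with $W_0=|s(0)|$. This gives finite-time reaching of $|s|\le\varepsilon$.

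For $|s|\le\varepsilon$, I invoke Lemma~\ref{lem:inner_invariance} with $\bar d_\xi(\varepsilon)$ replaced by $d_{\max}$. The lemma needs $G_{\mathrm{in}}(s)=a|s|^\gamma+b|s|^\alpha$ to be continuous and strictly increasing, which holds. It also needs $a\varepsilon^\gamma+b\varepsilon^\alpha>d_{\max}$, which I impose as a tuning condition. The lemma then gives invariance of the tube and convergence to $\Omega_r$, where $r$ solves $ar^\gamma+br^\alpha=d_{\max}$.

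Finite time for the convergence to $\Omega_r$ follows because, for $|s|\ge r+\delta$, $\dot W$ is bounded above by a negative constant. The estimate \eqref{eq:Tin-mixedpower} is carried over as in Theorem~\ref{thm:first_order}.

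\textbf{Item 1.} I expect this to be the main obstacle, since \eqref{eq:ppf_hold} only holds while $\xi$ is finite. I must rule out finite escape of $\xi$, i.e.\ rule out $|e_1|\to\rho$.

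From the previous step, $s$ is bounded for all $t$: $|s(t)|\le\bar s:=\max\{|s(0)|,\varepsilon\}$. Substituting $e_2=s-c\Psi(\xi)$ into \eqref{eq:xi_dyn sos} gives
\[
\dot\xi=\frac{\sqrt\pi}{2}\frac{e^{\xi^2}}{\rho}\Big(s-c\,\erf(\xi)-\dot\rho\,\erf(\xi)\Big).
\]
Here $-\dot\rho=\lambda(\rho_0-\rho_\infty)e^{-\lambda t}\ge 0$, so the sign of the bracket is not immediate. I would try a barrier argument on $\Psi(\xi)=e_1/\rho$. Its derivative is
\[
\frac{d}{dt}\frac{e_1}{\rho}=\frac{s-c\Psi-\dot\rho\Psi}{\rho}.
\]
As $\Psi\to\pm1$, the bracket tends to $\pm\big(|s|-c-\dot\rho\big)$ with the appropriate sign. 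To make $|\Psi|=1$ repelling, I want $c+\dot\rho(t)>\bar s$ for all $t$. Equivalently, $c>\bar s+\lambda(\rho_0-\rho_\infty)$ suffices, since $\dot\rho\ge-\lambda(\rho_0-\rho_\infty)$.

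Under this gain condition, $|\Psi(\xi)|\le\max\{|\Psi(\xi(0))|,\,\bar s/(c-\lambda(\rho_0-\rho_\infty))\}<1$ for all $t$. Hence $\xi$ stays bounded and $|e_1|<\rho$ for all $t\ge0$. I would state this condition on $c$ explicitly as an added hypothesis, or as an a-priori assumption analogous to Assumption~\ref{ass:xi_bound}. Without it, the claimed feasibility need not follow.
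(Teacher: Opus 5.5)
For items 2 and 3 you follow the paper's proof: $V=\tfrac12 s^2$, the bound $\dot V\le-(G_{\mathrm{hyb}}-d_{\max})|s|$, the Region A/B estimates \eqref{eq:TA}--\eqref{eq:TB} with $\eta_0=k_0-d_{\max}$, and the mixed-power inner gain. The real difference is item 1. The paper's argument there only observes that $\erf(\xi)\in(-1,1)$ for finite $\xi$, so \eqref{eq:ppf_hold} holds \emph{whenever} $\xi$ is finite; it never shows that $\xi$ stays finite. Your barrier argument on $p=e_1/\rho$, using $\dot p=(s-(c+\dot\rho)p)/\rho$ and $|s|\le\bar s$, supplies that missing step. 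The price is the added hypothesis $c>\bar s+\lambda(\rho_0-\rho_\infty)$, and you are right that some such hypothesis is needed. Indeed, \eqref{eq:u_law sos} and $s$ depend on $\xi$ only through $\Psi(\xi)=e_1/\rho$. So the closed loop is well defined even for $|e_1|\ge\rho$ and has no built-in barrier. For example, take $e_1(0)=0$ and $e_2(0)=S$ large. Then $s\ge S/2$ for a time of order $S/(k_0+k_1+d_{\max})$, while $\dot p\ge(S/2-c)/\rho_0$, so $|e_1|$ reaches $\rho$ after a time $O(1/S)$. The same observation shows that your ordering is not circular: $|s(t)|\le\bar s$ holds for all $t$ without assuming $\xi$ finite. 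Your condition is satisfiable because $|p(0)|<1$; for instance, $c(1-|p(0)|)>|e_2(0)|+\lambda(\rho_0-\rho_\infty)$ together with $c>\varepsilon+\lambda(\rho_0-\rho_\infty)$ suffices. It is, however, conservative. The paper's simulation values $c=0.8$, $\lambda(\rho_0-\rho_\infty)\approx 3.01$ violate it, and your argument really only needs $c+\dot\rho(t)>|s(t)|$ along the trajectory.

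One caveat on item 3, which you share with the paper. A negative constant bound on $\dot W$ over $\{|s|\ge r+\delta\}$ gives finite-time entry into $\{|s|\le r+\delta\}$ for each $\delta>0$, not into $\Omega_r$ itself. Also, \eqref{eq:Tin-mixedpower} is the fixed-time bound for the \emph{unperturbed} inequality $\dot W\le-aW^\gamma-bW^\alpha$, so it does not transfer as is. Exact finite-time entry into $\Omega_r$ is in fact false for $d\equiv d_{\max}$. In that case $s\equiv r$ is an equilibrium of a locally Lipschitz scalar ODE, so a solution with $r<s(0)\le\varepsilon$ never reaches it in finite time. What you can prove is the following. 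For any $r'\in(r,\varepsilon]$ and $W\in[r',\varepsilon]$, one has $G_{\mathrm{in}}(W)-d_{\max}\ge\theta\,G_{\mathrm{in}}(W)$ with $\theta=1-d_{\max}/G_{\mathrm{in}}(r')$. Hence $\{|s|\le r'\}$ is reached within $\theta^{-1}$ times the bound \eqref{eq:Tin-mixedpower}.
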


\begin{proof}
Since $\erf(\xi)\in(-1,1)$ for all finite $\xi$.
Hence \eqref{eq:ppf_hold} holds whenever $\xi(t)$ is finite.
Consider a Lyapunov function $V(s)=\frac12 s^2$. Taking time derivative of Lyapunov function and for $s\neq 0$, using \eqref{eq:sdot_closed} and $|d(t)|\le d_{\max}$, one can obtain
\begin{align}
    \begin{split}
   \dot V &= s\dot s
= -G_{\mathrm{hyb}}(s)|s| + s\,d(t) \\
&\le -\big(G_{\mathrm{hyb}}(|s|)-d_{\max}\big)|s|.
\label{eq:Vdot_s second order}     
    \end{split}
\end{align}

\emph{Case 1: $|s|>\varepsilon$}\newline 
From \eqref{eq: Ghyb}, for $|s|>\varepsilon$, $G_{\mathrm{hyb}}=G_{\text{out}}$ and thus using \eqref{eq:Vdot_s second order}, with $k_0>d_{\max}$, the resulting scalar inequality is identical to that in
Theorem~\ref{thm:first_order}, and therefore admits the same finite-time bound.





\emph{Case 2: $|s|\le\varepsilon$}\newline 
For $|s|\le\varepsilon$, $G_{\mathrm{hyb}}=G_{\text{in}}$ and using~\eqref{eq:Vdot_s second order} gives
\begin{equation}
\dot V \le -\big(a|s|^\gamma+b|s|^\alpha-d_{\max}\big)|s|.
\label{eq:Vdot_in}
\end{equation}
 The mixed-power gain yields convergence to the residual set \eqref{eq:residual_set sos} in time \eqref{eq:Tin-mixedpower}. 

\end{proof}
\begin{remark}\label{rem:quasi_sliding sos}
In the presence of a bounded $d(t)$, the sliding dynamics
$\dot s = -G_{\mathrm{hyb}}(|s|)\sgn(s) + d(t)$ generally yield a \emph{quasi-sliding}
motion rather than ideal sliding. In particular, Theorem~\ref{thm:second_order}
guarantees finite-time attraction of $s(t)$ to the residual set
\begin{align}
    \begin{split}
  \Omega_s := \{\, s\in\mathbb{R} : |s|\le r \,\}, \\
\text{where } r>0 \text{ solves } a r^\gamma + b r^\alpha = d_{\max}.      
    \end{split}
    \label{eq:Omega_s_remark}
\end{align}
Once the trajectory enters $\Omega_s$, the definition $s=e_2+c\Psi(\xi)$ implies
\begin{equation}
|e_2(t)| \le |s(t)| + c|\Psi(\xi(t))|
\le r + c,
\label{eq:e2_bound_qs}
\end{equation}
since $\Psi(\xi)\in(-1,1)$. Thus, $e_2(t)$ remains uniformly bounded during
quasi-sliding. Moreover, by selecting $c$ sufficiently large relative to the
residual bound $r$ (equivalently, by making $r$ small via increased inner-region
gain), the quasi-sliding layer can be made arbitrarily thin, yielding smaller
velocity excursions.
Finally, in the disturbance-free case ($d_{\max}=0$), the residual radius satisfies
$r=0$, so $\Omega_s$ collapses to zero and ideal sliding is recovered, implying
finite-time convergence of $(e_1,e_2)$ to $(0,0)$.
\end{remark}

\subsection{Parameter tuning and feasibility guidelines}\label{subsec:tuning}

This subsection summarizes practical feasibility conditions for tuning the proposed PPF-aware
hybrid gain \eqref{eq: Ghyb} under matched disturbances $|d(t)|\le d_{\max}$, based on
Assumption~\ref{ass:xi_bound}, Remark~\ref{rem:xi_bound}, and Lemma~\ref{lem:inner_invariance}.

\subsubsection{ Choose PPF parameters}
Select $(\rho_0,\rho_\infty,\lambda)$ such that $\rho_0>|x(0)|$ and $\rho_\infty>0$.
Compute the initial transformed magnitude
\begin{align}
 \xi_0 = \left|\erf^{-1}\!\left(\frac{x(0)}{\rho_0}\right)\right|.  
 \label{eq: initial xi value}
\end{align}
\subsubsection{ Outer-region feasibility}
Using Assumption~\ref{ass:xi_bound}, the effective disturbance bound in transformed coordinates is \eqref{eq:dbar_xi}
\begin{align}
  \bar d_\xi(\xi_0)
:=\frac{\sqrt{\pi}}{2}\frac{e^{\xi_0^2}}{\rho_\infty}\,d_{\max} 
\label{eq: dbar zero}
\end{align}
where $\xi_0$ is any a priori bound on $|\xi|$ (Remark~\ref{rem:xi_bound}). A sufficient feasibility condition to ensure
strict decrease of the Lyapunov function for $|\xi|>\varepsilon$ is
\begin{equation}
 k_0 > \bar d_\xi(\xi_0).
\label{eq:k0_outer_guideline}
\end{equation}
This guarantees finite-time entry into the inner tube $|\xi|\le\varepsilon$ with the explicit
bound $T_{\mathrm{out}}=T_A+T_B$ given in \eqref{eq:TA} and \eqref{eq:TB}.

\subsubsection{Inner-region feasibility}
For inner-region analysis, apply Remark~1 with $\xi_0=\varepsilon$ (provided $\varepsilon<|\xi_0|$),
so that the worst-case effective disturbance inside the tube satisfies
\[
\bar d_\xi(\varepsilon)
=\frac{\sqrt{\pi}}{2}\frac{e^{\varepsilon^2}}{\rho_\infty}\,d_{\max}.
\]
By Lemma~1, a sufficient condition for robust invariance of the inner tube
$\mathcal{T}_\varepsilon=\{|\xi|\le\varepsilon\}$ is
\begin{equation}
 G_{\mathrm{in}}(\varepsilon) \ge \bar d_\xi(\varepsilon) + \eta_\varepsilon,\qquad \eta_\varepsilon>0 .
\label{eq:inner_tube_guideline}
\end{equation}

\subsubsection{ Explicit tuning for the mixed-power inner gain}
With $G_{\mathrm{in}}(r)=a r^\gamma + b r^\alpha$ ($0<\gamma<1<\alpha$), condition
\eqref{eq:inner_tube_guideline} becomes the linear feasibility inequality in $(a,b)$
\begin{equation}
 a\varepsilon^\gamma + b\varepsilon^\alpha \ \ge\ \bar d_\xi(\varepsilon)+\eta_\varepsilon.
\label{eq:ab_feas_guideline}
\end{equation}
Moreover, the residual radius $r$ is determined implicitly by
\begin{equation}
a r^\gamma + b r^\alpha = \bar d_\xi(\varepsilon),
\qquad \Omega_\xi=\{|\xi|\le r\}
\label{eq:residual_guideline}
\end{equation}
and can be made smaller by increasing $(a,b)$.

\subsubsection{Second-order case}
For the second-order design, the closed-loop sliding dynamics satisfy
$\dot s=-G_{\mathrm{hyb}}(|s|)\sgn(s)+d(t)$, hence the effective disturbance bound is unscaled
and equals $d_{\max}$. Therefore, the corresponding feasibility conditions become
\begin{align}
    \begin{split}
   k_0>d_{\max},\qquad
a\varepsilon^\gamma+b\varepsilon^\alpha \ge d_{\max}+\eta_\varepsilon     
    \end{split}
\end{align}
and the residual radius in $\Omega_s=\{|s|\le r\}$ satisfies $a r^\gamma+b r^\alpha=d_{\max}$ as mentioned in Remark~\ref{rem:quasi_sliding sos}.



 \section{Simulation Results}\label{sec: simulation results}
This section presents numerical validation of the proposed PPF-aware HG-FTSMC. Simulations are conducted for both first-order and second-order systems to demonstrate feasibility, robustness, and performance improvement under prescribed transient constraints. Unless stated otherwise, all simulations are performed over a time horizon of $T = 10~\mathrm{s}$ using a fixed integration step of $\Delta t = 10^{-3}~\mathrm{s}$ and disturbance of the form $d(t) = d_\mathrm{max}\sin(10t)$, where $d_\mathrm{max} = 0.25$ is used.
\subsection{First order system}
Simulation is carried out for three different initial conditions $x_0 = \{3.0,4.0,4.5\}$ while utilsing following PPF parameters $ (\rho_0,\rho_\infty,\lambda) = (4,0.05,4)$. The constant $k_0$ is computed using \eqref{eq:k0_outer_guideline} while remaining parameters are chosen as  $(k_1,\epsilon_0,\gamma) = (1.9,0.6,0.7)$. With $\epsilon = 0.2$, and using \eqref{eq:ab_feas_guideline}, the mixed-power inner gain parameters $(a,b,\gamma,\alpha) = (0.2,0.5,0.7,1.5)$ are used to ensure disturbance rejection with smooth gain transition inside the prescribed performance region.
Fig. \ref{fig: Results for first order system} illustrates the simulation results. Fig.\ref{fig: state fos} shows that the state trajectories converge monotonically to the origin for all initial conditions. The corresponding control inputs shown in Fig. \ref{fig: control fos} exhibit comparable peak magnitudes despite varying initial errors, confirming uniform robustness of the design. In this case, the primary role of the PPF is not overshoot suppression—which is inherently absent in first-order dynamics—but rather the enforcement of transient feasibility and the regulation of control effort in the presence of disturbances, thereby demonstrating the practical effectiveness of the method.
\begin{figure}
    \begin{subfigure}[b]{.25\textwidth}
			 \centering	
         \includegraphics[width=\linewidth]{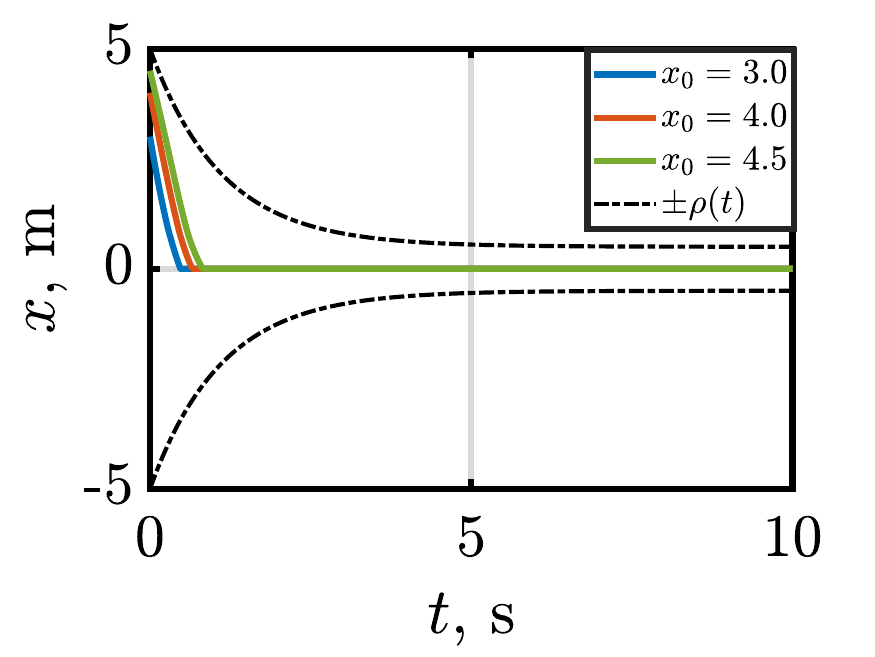}
			\caption{ State trajectories}
			\label{fig: state fos}     
		\end{subfigure}%
		\begin{subfigure}[b]{.25\textwidth}
			\centering		\includegraphics[width=\linewidth]{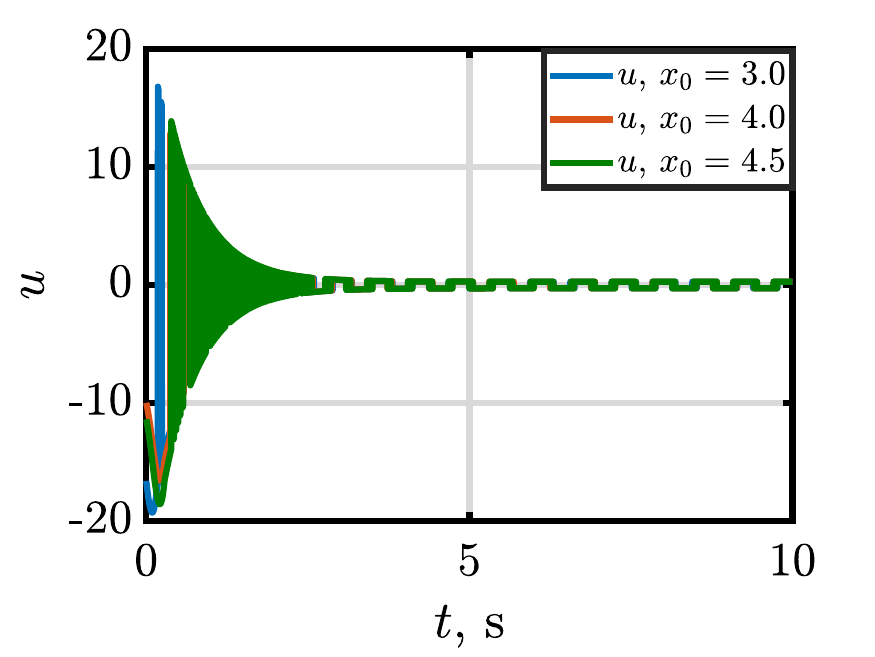}
			\caption{ Control profiles}
			\label{fig: control fos}
		\end{subfigure}%
        \caption{ Results for first order system}
        \label{fig: Results for first order system}
        \end{figure}
 \subsection{Second order system}
 For the second-order case, an underdamped plant model~\eqref{eq:so_sys} is considered with $\omega_n = 2.0$ and $\zeta = 0.15$. The initial conditions are considered as $[e_1(0)\; e_2(0)]^\top = [2.0\; -0.3]^\top$. The hybrid gain paramters are chosen as $k_0 = 0.8$, $k_1 = 1.6$, $\gamma = 0.7$, $\epsilon_0 = 0.3$, $\Lambda = 0.9$, and $\alpha = 1.5$, while a smoothed sign function$\left(\frac{s}{|s|+\epsilon}\right)$ with boundary layer width $\epsilon = 10^{-2}$ is used in both PPF-aware and non-PPF controllers to ensure a fair comparison of control effort. The sliding surface~\eqref{eq:s_def sos} is defined with slope $c = 0.8$. Prescribed performance is imposed on the position error $e_1$ using \eqref{eq:erf_map sos} with $ (\rho_0,\rho_\infty,\lambda) = (2.5,0.35,1.4)$.
\begin{table}[t]
\centering
\caption{Performance metrics under matched peak control}
\label{tab:performance_metrics}
\renewcommand{\arraystretch}{1}
\resizebox{0.98\columnwidth}{!}{
\begin{tabular}{|c|c|c|c|}
\hline
\textbf{Metric} 
& \textbf{non-PPF} 
& \textbf{PPF-aware} 
& \textbf{Gain (\%)} \\
\hline



$J_u = \displaystyle \int_{0}^{T} u^2(t)\,dt$ 
& $35.947$ 
& $32.559$ 
& $9.4$ \\

\hline

$J_{\text{peak}} = \max\limits_{t \in [0,T]} \frac{|e_1(t)|}{\rho(t)}$ 
& $1.169$ 
& $0.999$ 
& No violation \\

\hline

$J_{\text{viol}} = \displaystyle \int_{0}^{T} \max\!\big(0,\,|e_1(t)|-\rho(t)\big)\,dt$ 
& $0.159$ 
& $0$ 
& -- \\

\hline

$\text{IAE} = \displaystyle \int_{0}^{T} |e_1(t)|\,dt$ 
& $2.821$ 
& $2.559$ 
& $9.3$ \\

\hline

$\text{ISE} = \displaystyle \int_{0}^{T} e_1^2(t)\,dt$ 
& $3.106$ 
& $2.733$ 
& $12.0$ \\

\hline
\end{tabular}
}
\end{table}
\begin{figure}
    \begin{subfigure}[b]{.25\textwidth}
			 \centering
        \includegraphics[width=\linewidth]{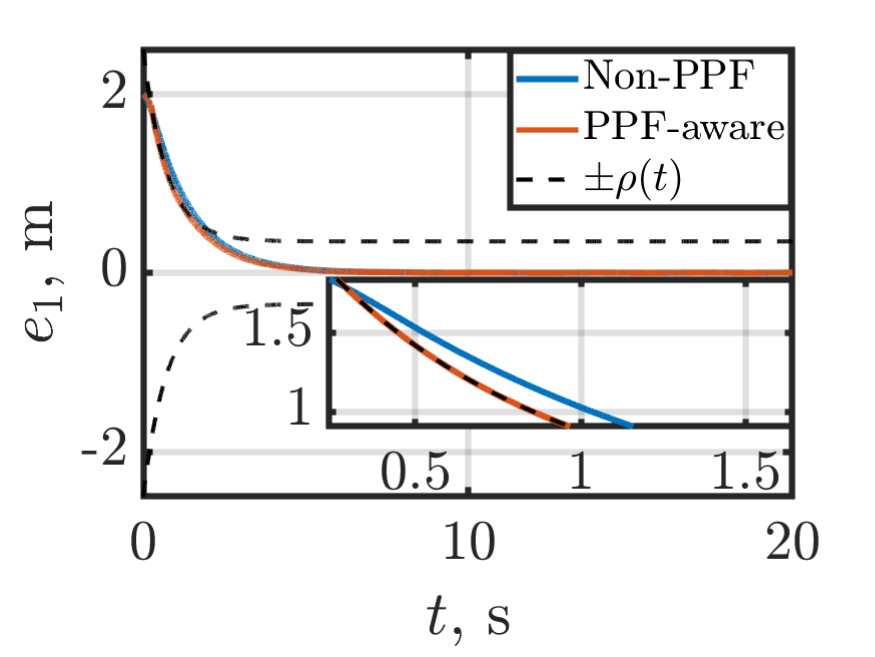}
			\caption{ Position profile}
			\label{fig: Position profile}     
		\end{subfigure}%
		\begin{subfigure}[b]{.25\textwidth}
			  \centering	
            \includegraphics[width=\linewidth]{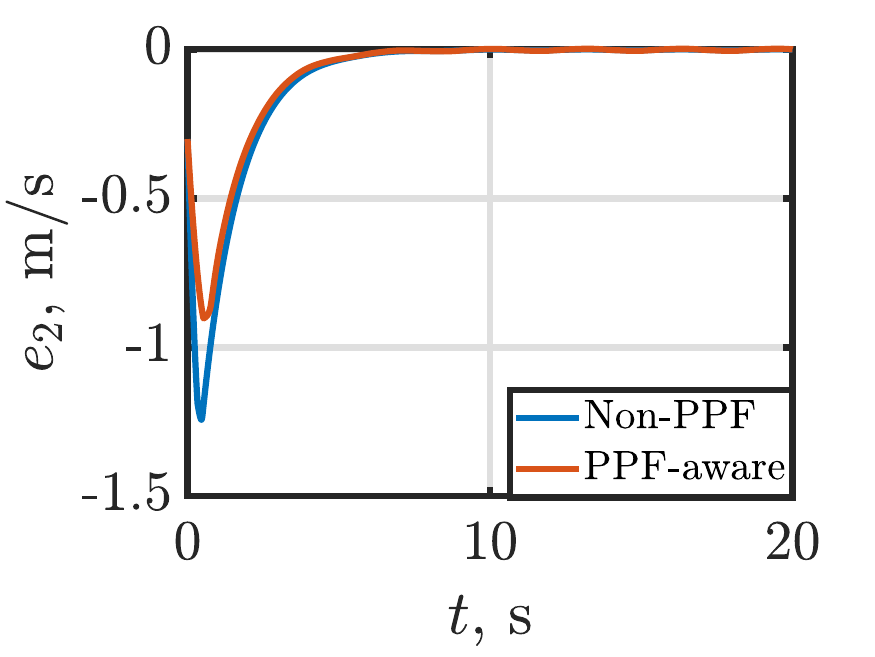}
			\caption{ Velocity profile }
			\label{fig: Velocity profile}
		\end{subfigure}%
        \qquad
         \begin{subfigure}[b]{.25\textwidth}
			\centering
            \includegraphics[width=\linewidth]{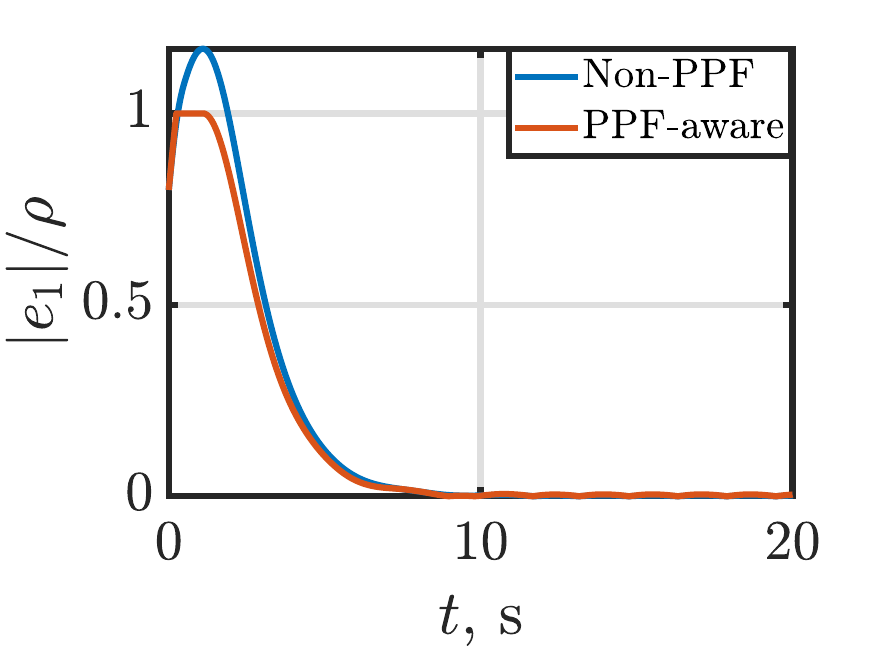}
			\caption{ Normalised error}
			\label{fig: Normalised error}     
		\end{subfigure}%
		\begin{subfigure}[b]{.25\textwidth}
			\centering

         \includegraphics[width=\linewidth]{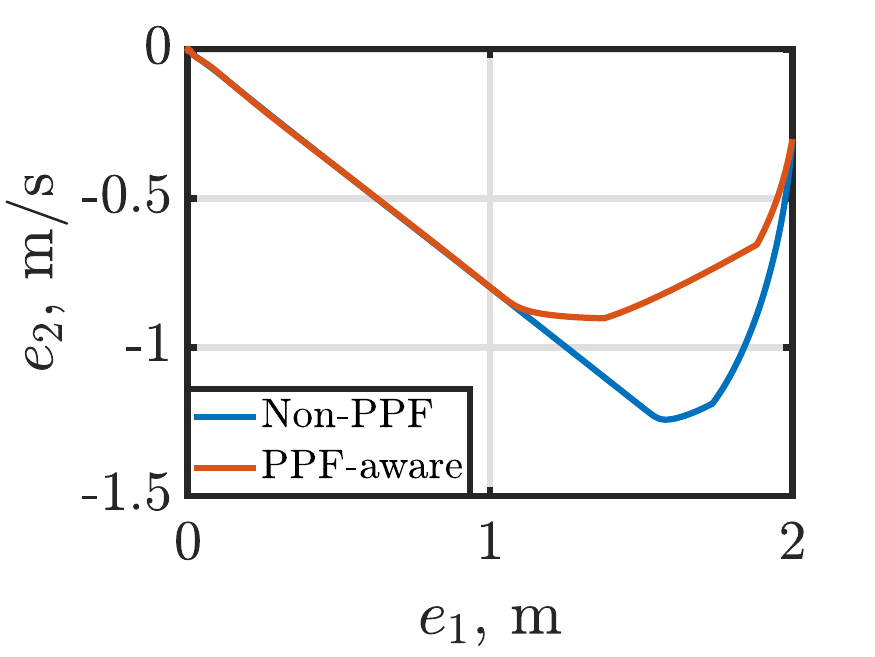}
			\caption{ Phase portrait }
			\label{fig: Phase portrait}
		\end{subfigure}%
         \caption{ Comparative numerical simulation results}
			\label{fig: Comparative numerical simulation results}
\end{figure}
Fig.~\ref{fig: Comparative numerical simulation results} illustrates the comparative results of the second-order system under matched $u_{\max}$. As shown in Fig.~\ref{fig: Position profile}, the PPF-aware controller confines the position error strictly within the prescribed performance bounds $\pm \rho (t)$ throughout the transient, whereas the non-PPF controller exhibits an initial excursion and violates the envelope before converging. The corresponding velocity responses are shown in Fig.~\ref{fig: Velocity profile}, indicating comparable convergence rates for both controllers. Fig.~\ref{fig: Normalised error} further confirms strict satisfaction of the prescribed performance constraint, as the normalized error $|e_1(t)|/\rho(t)$
 remains below unity for the PPF-aware case, while during the initial transient, this bound exceeds for the non-PPF case. The phase-plane trajectories in Fig.~\ref{fig: Phase portrait} show that both designs achieve convergence to the origin, with the PPF-aware formulation guiding the system along a constraint-compliant path that avoids large excursions in the position–velocity plane. Table~\ref{tab:performance_metrics} summarises the quantitative comparison under matched  
$u_\mathrm{max} = 6.097$. It is observed that, under identical peak actuation limit, the PPF-aware controller strictly enforces the prescribed performance constraint ($J_{\text{viol}}=0$, $J_{\text{peak}}\approx 1$) while achieving modest reductions in integral error and control energy metrics compared to the non-PPF design.  The PPF-aware design reduces the cumulative tracking error and control energy, achieving approximately 9.3\% reduction in IAE, 12.0\% reduction in ISE, and 9.4\% reduction in 
$J_u$ compared to the non-PPF case. These gains are attributed to the PPF-induced shaping of the position error trajectory within a constraint-consistent envelope.

\section{Conclusions}\label{sec:conclusion}

This paper presented a PPF–aware HG-FTSMC framework that enforces transient performance requirements while preserving the robustness and finite-time convergence properties of the sliding mode control. The theoretical framework and stability characteristics were developed for first-order systems and extended to second-order dynamics, where PPF integration enables constraint-consistent finite-time control with explicit transient shaping for nonlinear systems. 
Simulation studies for both first- and second-order systems corroborate the theoretical findings and demonstrate the practical implications of the proposed approach. In particular, a detailed comparison of the two approaches was conducted on second-order dynamics under matched control peak, ensuring a fair qualitative assessment. Under this condition, the PPF-aware formulation strictly enforces the prescribed performance envelope, yielding zero transient constraint violation and a normalized peak error bounded by unity, while achieving consistent reductions of approximately 9–12\% in integral error and control energy metrics relative to a non-PPF HG-FTSMC design. 
These improvements are obtained without increasing peak actuation demand or compromising finite-time convergence.

Overall, the proposed PPF-based HG-FTSMC provides a mechanism for translating bounded control into verifiable transient performance guarantees, making it particularly suitable for nonlinear systems where transient constraints, actuator feasibility, and robustness against disturbances are critical design requirements.

\bibliographystyle{unsrt}   
\bibliography{refICC2025}

\end{document}